\documentclass[english]{elsarticle}
\usepackage[T2A,T1]{fontenc}
\usepackage[latin9]{inputenc}
\synctex=-1
\usepackage{float}
\usepackage{amsthm}
\usepackage{amsmath}
\usepackage{amssymb}
\usepackage{graphicx}
\usepackage{esint}

\makeatletter

\newcommand{\noun}[1]{\textsc{#1}}
\DeclareRobustCommand{\cyrtext}{%
  \fontencoding{T2A}\selectfont\def\encodingdefault{T2A}}
\DeclareRobustCommand{\textcyr}[1]{\leavevmode{\cyrtext #1}}
\AtBeginDocument{\DeclareFontEncoding{T2A}{}{}}

\newcommand{\lyxmathsym}[1]{\ifmmode\begingroup\def\b@ld{bold}
  \text{\ifx\math@version\b@ld\bfseries\fi#1}\endgroup\else#1\fi}

\numberwithin{equation}{section}
\numberwithin{figure}{section}
\theoremstyle{plain}
\newtheorem{thm}{\protect\theoremname}
  \theoremstyle{definition}
  \newtheorem{defn}[thm]{\protect\definitionname}
  \theoremstyle{plain}
  \newtheorem*{lem*}{\protect\lemmaname}
 \theoremstyle{definition}
 \newtheorem*{defn*}{\protect\definitionname}
  \theoremstyle{plain}
  \newtheorem*{thm*}{\protect\theoremname}

\makeatother

\usepackage{babel}
  \providecommand{\definitionname}{Definition}
  \providecommand{\lemmaname}{Lemma}
  \providecommand{\theoremname}{Theorem}
\providecommand{\theoremname}{Theorem}

\begin{document}

\title{\textbf{Solid Continuum with Thermofluctuation Kinetics of Microcracks.
Phase Transition.}}

\author{{\normalsize{V. M. Gertsik. }}%
\thanks{Institute of Earthquake Prediction Theory and Mathematical Geophysics
RAS, Moscow, RF, getrzik@ya.ru%
}{\normalsize{, A.L.Petrosyan }}%
\thanks{Institute of Earthquake Prediction Theory and Mathematical Geophysics
RAS, Moscow, RF, alikmitpan@yandex.ru%
}}
\begin{abstract}
{\normalsize{Thermodynamic equations for a solid and a solid continuum
under stress are derived on the basis of a multicomponent mean field
Markov process for thermofluctuation kinetics of microcracks. The
resulting continuum is viscous elastoplastic continuum with damage.
It can radiate elastic waves . The existence of phase transitions
with microcrack density as an order parameter is proved for a stationary
state of a special model of solid.}} {\normalsize{For a finite large
system the distribution of the logarithmic power of acoustic emission
at a critical point is similar to the distribution of the logarithmic
energy of earthquakes.}}{\normalsize \par}
\end{abstract}
\maketitle

\section*{INTRODUCTION}

This paper presents a system of equations for the non-equilibrium
dynamics of an elastic continuum involving the appearance and healing
of microcracks. The basis for deriving the equations is a multicomponent
continuous-time mean field Markov process with intensities of the
activation type. Since the contribution of a crack to the deformation
of the body is memorized after the healing, the resulting behavior
is nonlinear viscous elastoplasticity. This type of behavior demonstrates
qualitative correspondence with the real properties of a solid. For
instance, a solid shows brittleness and elasticity at low temperatures
and/or large rates of deformation and fluidity and plasticity at high
temperatures and/or small rates of deformation. 

For simplest model the existence of phase transition between phases
of a low and high density of microcracks is strictly proved. An area
of the high density phase is a soft inclusion in the continuum as
its pliability is increased. For certain types of stress field this
area can propagate in the continuum like a crack and radiate elastic
waves. Therefore this continuum can serve as model for earthquake
generation. 

This system of equations expands the range of known equations for
continuous continuum such as the gas dynamics equations, Navier\textendash{}Stokes
equations etc.

The idea of using the activation principle in crack kinetics dates
back to the kinetic concept of strength suggested by S.N.Zhurkov \citet{Zhurkov-1}
The empirical Zhurkov formula

\begin{equation}
\tau=\tau_{0}\;\exp\left\{ \frac{U-\nu\sigma}{k_{B}T}\right\} \label{eq:0-1}
\end{equation}
 describes lifetime $\tau$ of a specimen under tensile load $\sigma$
at temperature $T$, where $k_{B}$ is Boltzman's constant, $\tau_{0,}\; U$
and $\mbox{\ensuremath{\nu}}$ are constants. In spite of the fact
that Zhurkov's formula clearly indicates the thermofluctuation mechanism
of fracture, numerous attempts to create on this basis a mathematical
apparatus of the theory have not been successful. The reason seems
to lie in the fact that the quantity $1/\tau$ cannot be directly
used as intensity of microfracture generation, Firstly, $\tau$ is
a macroscopic quantity and cannot serve as ``a first principle''
but must be calculated from equations known beforehand. Secondly,
the energy in the numerator of the expression in the braces must be
a quadratic, but not a linear, function of $\sigma$, because the
linearity contradicts the definition of elastic energy.

However, if we use an expression of the type $1/\tau$ with quadratic
dependence on the stress as the intensity of microcrack appearance
, physically correct equations can then be derived. Moreover, as will
be demonstrated below, experimental relations of specimen lifetime
$\tau$ versus $\sigma$ and $T$ presented by Zhurkov as (\ref{eq:0-1})
are reproduced in our model..

\section{MULTICOMPONENT MEAN FIELD PROCESS}
\begin{defn}
Let $\bar{\xi}_{N}(t)\equiv\{\xi_{N}(x,t),$ $x\in\Omega_{N}\},\,\xi_{N}(x,t)=0,1,\,|\Omega_{N}|=N,\, t\text{\ensuremath{\ge}}0,$
($|A|$ denotes the number of elements in $A$) be a $N$-component
continuous-time Markov process with state-space $\left\{ 0,...,K\right\} ^{\Omega_{N}}$.
Denote $\mathrm{\boldsymbol{y}}(t)=\left\{ \frac{n_{1}(t)}{N},...,\frac{n_{K}(t)}{N}\right\} ,$$n_{k}(t)\equiv n_{k}(\bar{\xi}_{N}(t))=|\{x\in\Omega_{N}:\,\xi_{N}(x,t)=k\}|,\: k=1,...,K.$
and let $\Lambda_{k}(\mathrm{\boldsymbol{z}}),\mathrm{\, M_{\mathit{k}}(\mathit{\boldsymbol{\mathrm{z}}})},\:\mathrm{\mathbf{z}}=(z_{1},...,z_{K})$$,$
be positive continuous functions on $[0,1]^{K}$ . We assume that
only point translations $\left\{ 0\rightarrow k,k\rightarrow0\right\} ,k=1,...,K,$
are possible with the rates given by conditional probabilities 
\begin{gather}
\begin{cases}
\Pr\{\xi_{N}(x,t+h)=k,\, k>0|\xi_{N}(x,t)=0,\mathrm{\boldsymbol{y}}(t)\}=\Lambda_{k}(\mathrm{\boldsymbol{y}}(t))h+o(h),\\
\Pr\{\xi_{N}(x,t+h)=0|\xi_{N}(x,t)=k,\, k>0,\mathrm{\boldsymbol{y}}(t)\}=\mathrm{M}_{k}(\mathrm{\boldsymbol{y}}(t))h+o(h)
\end{cases}\label{eq:1-1-1}
\end{gather}

\end{defn}
(As usual $\Pr\{A\}$ denotes the probability of $A$ and $\Pr\{A|B\}$
denotes the conditional probability of $A$ given $B$.) This process
will be called a ``\emph{mean field}'' Markov process.

.

Below we are going to prove that components $\xi_{N}(x,t)$ of stationary
Markov process $\bar{\xi}_{N}$ with $K=1$ are asymptotically independent
at $N\rightarrow\infty$(see Appendix).. 

For stationary processes with $K\geq2$ and non-stationary processes
$\bar{\xi}(t)\equiv\bar{\xi}_{\infty}(t)$ the proof of the independence
of components is unavailable. Therefore we need the additional definition
in these cases..
\begin{defn}
A mean field Markov process $\bar{\xi}\equiv\left\{ \xi(x,t),\, x\in\Omega,\, t\geq0\right\} $on
a countable set of points $\Omega$ is a set of independent Markov
processes $\xi(x,t),\, x\in\Omega,$ with state-space $\left\{ 0,..,,K\right\} $,
with conditional probabilities
\begin{gather*}
\begin{cases}
\Pr\{\xi_{N}(x,t+h)=k,\, k>0|\xi_{N}(x,t)=0,\mathrm{\boldsymbol{y}}(t)\}=\Lambda_{k}(\mathrm{\boldsymbol{p}}(t))h+o(h),\\
\Pr\{\xi_{N}(x,t+h)=0|\xi_{N}(x,t)=k,\, k>0,\mathrm{\boldsymbol{y}}(t)\}=\mathrm{M}_{k}(\mathrm{\boldsymbol{p}}(t))h+o(h)
\end{cases}
\end{gather*}
where $\boldsymbol{\mathrm{p}}\mathrm{(}t\mathrm{)}=\left\{ p_{k}(t),\, k=1,..,,K\right\} $,
$p_{k}(t)\mathrm{\equiv Pr}\left\{ \xi(x,t)=k\right\} $.\end{defn}
\begin{lem*}
The probabilities $p_{k}(t)$ are discribed by equations

\[
\frac{dp_{k}(t)}{dt}=\left[1-\sum_{i=1}^{K}p_{l}(t)\right]\mathrm{\Lambda_{\mathit{k}}}\left(\boldsymbol{\mathrm{p}}\mathrm{(}t\mathrm{)}\right)-p_{k}(t)\mathrm{M_{\mathit{k}}}\left(\boldsymbol{\mathrm{p}}\mathrm{(}t\mathrm{)}\right),\: k=1,...K.
\]
\end{lem*}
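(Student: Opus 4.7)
The plan is to derive the equation as a Kolmogorov forward equation for a single component $\xi(x,t)$. By Definition~2 the components $\xi(x,\cdot)$, $x\in\Omega$, are independent, and each is a time-inhomogeneous continuous-time Markov chain on $\{0,1,\dots,K\}$ whose transition rates depend on $t$ only through the common deterministic marginal $\boldsymbol{p}(t)$. In particular $p_k(t)=\Pr\{\xi(x,t)=k\}$ does not depend on $x$, so it suffices to analyse the marginal at an arbitrary fixed $x$.

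Fixing such an $x$ and conditioning on the state at time $t$, the law of total probability gives
\[
p_k(t+h)\;=\;\sum_{j=0}^{K}\Pr\{\xi(x,t+h)=k\mid\xi(x,t)=j\}\,p_j(t).
\]
Because only the point translations $0\leftrightarrow j$ are allowed and $\Lambda_k,M_k$ are continuous on the compact set $[0,1]^{K}$, I would compute for fixed $t$ as $h\to 0^{+}$: (i) from $j=0$, $\Pr\{\xi(x,t+h)=k\mid\xi(x,t)=0\}=\Lambda_k(\boldsymbol{p}(t))h+o(h)$; (ii) from $j=k$, $\Pr\{\xi(x,t+h)=k\mid\xi(x,t)=k\}=1-M_k(\boldsymbol{p}(t))h+o(h)$, since the only exit from state $k$ is to $0$; and (iii) from $j\notin\{0,k\}$, the probability is $o(h)$, because reaching $k$ from $j$ requires at least two jumps inside $[t,t+h]$. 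Substituting, using $p_0(t)=1-\sum_{i=1}^{K}p_i(t)$, subtracting $p_k(t)$, dividing by $h$ and letting $h\to 0^{+}$ then produces the stated ODE.

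The only non-routine step is the uniform $o(h)$ bound in (iii): two or more jumps of a single component inside an interval of length $h$ must have probability $O(h^2)$. This rests on the fact that the total outgoing rate from every state of $\xi(x,\cdot)$ is uniformly bounded by $\sup_{\boldsymbol{z}\in[0,1]^{K}}\bigl(\sum_{k=1}^{K}\Lambda_k(\boldsymbol{z})+\max_{k}M_k(\boldsymbol{z})\bigr)$, which is finite by the assumed continuity of $\Lambda_k,M_k$ on the compact set $[0,1]^{K}$. The mean-field self-consistency (dependence of the rates on the unknown $\boldsymbol{p}(t)$) does not affect the derivation itself, since within the Markov evolution of a single component $\boldsymbol{p}(t)$ is a prescribed deterministic, $[0,1]^{K}$-valued function of time; it enters only afterwards, when one asks about existence and uniqueness of solutions to the resulting nonlinear autonomous system.
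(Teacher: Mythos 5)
Your proposal is correct and follows essentially the same route as the paper: expand $p_k(t+h)$ by conditioning on the state at time $t$, retain the first-order contributions from $j=0$ and $j=k$, and pass to the limit $h\to 0^{+}$. The only difference is that you explicitly justify discarding the $j\notin\{0,k\}$ terms as $o(h)$ via the uniform bound on the total jump rate, a point the paper's proof silently omits.
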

\begin{proof}
According to the law of large numbers $\underset{N\rightarrow\infty}{\lim}\frac{n_{k}(t)}{N}=p_{k}(t),\, k=1,...,K$
and the equations are easily deduced from \ref{eq:1-1-1} by the following
manipulation:
\begin{gather*}
p_{k}(t+h)=\underset{k^{\prime}=0,...K}{\sum}\Pr\{\xi_{N}(x,t+h)=k|\xi_{N}(x,t)=k^{\prime},\boldsymbol{\mathrm{p}}\mathrm{(}t\mathrm{)}\}\mathrm{Pr}\left\{ \xi(x,t)=k^{\prime}\right\} =\\
=\Pr\{\xi_{N}(x,t+h)=k,\, k>0|\xi_{N}(x,t)=k,\boldsymbol{\mathrm{p}}\mathrm{(}t\mathrm{)}\}p_{k}(t)+\\
+\Pr\{\xi_{N}(x,t+h)=k,\, k>0|\xi_{N}(x,t)=0,\boldsymbol{\mathrm{p}}\mathrm{(}t\mathrm{)}\}\left[1-\sum_{i=1}^{K}p_{l}(t)\right]=\\
=\left[1-\mathrm{M_{\mathit{k}}}\left(\boldsymbol{\mathrm{p}}\mathrm{(}t\mathrm{)}\right)h\right]p_{k}(t)+\mathrm{\Lambda_{\mathit{k}}}\left(\boldsymbol{\mathrm{p}}\mathrm{(}t\mathrm{)}\right)\left[1-\sum_{i=1}^{K}p_{l}(t)\right]h+o(h)=\\
=p_{k}(t)+\left\{ \left[1-\sum_{i=1}^{K}p_{l}(t)\right]\mathrm{\Lambda_{\mathit{k}}}\left(\boldsymbol{\mathrm{p}}\mathrm{(}t\mathrm{)}\right)-p_{k}(t)\mathrm{M_{\mathit{k}}}\left(\boldsymbol{\mathrm{p}}\mathrm{(}t\mathrm{)}\right)\right\} h+o(h).
\end{gather*}
The result follow from 
\[
\frac{dp_{k}(t)}{dt}=\underset{h\rightarrow0}{\lim}\frac{p_{k}(t+h)-p_{k}(t)}{h}.
\]

\end{proof}
.

\section{APPEARANCE AND HEALING OF MICROCRACKS}

We imagine an infinite solid volume to be subdivided into equal cubic
cells with edges perpendicular to coordinate axes $x_{1}$, $x_{2}$,
$x_{3}$ and with centers at the points of a cubic lattice $\mathbb{Z_{\mathrm{a}}^{\mathrm{3}}}$,
$a$ is the step of the lattice. We prescribe at boundaries of the
body a uniform symmetric tensor of second order (stress tensor) $\sigma\equiv\sigma_{kl}(t)$,
$k,l=1,2,3,$ that is a function of, in general, time $t$. A configuration
$\zeta_{t}$ has at $x\in\mathbb{Z_{\mathrm{a}}^{\mathrm{3}}}$ the
value $\zeta_{x,t}=u_{k},\, k=1,...,K.$ if the cell centered at $x$
contains a planar disk (microcrack) centered at the same point, with
radius $r<a/2$ and with normal $u_{k}$. The value $\zeta_{x,t}=u_{0}$
corresponds to an empty cell.

We define a multicomponent mean field birth-death process of microcracks
appearance and disappearance by equations

\begin{equation}
\frac{dp_{k}(t)}{dt}=\left[1-\sum_{i=1}^{K}p_{l}(t)\right]\mathrm{\Lambda_{\mathit{k}}}-p_{k}(t)\mathrm{M_{\mathit{k}}}\label{eq:9}
\end{equation}
with the intensities as defined below. 

The intensity $\mathrm{M}_{\mathit{k}}\equiv\mathrm{M_{\mathit{k}}}\left(\boldsymbol{\mathrm{p}}\mathrm{(}t\mathrm{)}\right)$
of microcrack healing has the form

\begin{equation}
\mathrm{M}_{\mathit{k}}=c_{0k}\;\exp\left\{ -\beta U_{k}(\sigma;\,\boldsymbol{\mathrm{p}}\mathrm{(}t\mathrm{)})\right\} \label{eq:2}
\end{equation}
where $c_{0k}$ are the numeric constants, $\beta=\frac{1}{k_{B}T},$
$T$ is temperature, $k_{B}$ is Boltzmann's constant, $\sigma$ is
the stress tensor, $U_{k}(\sigma;\,\boldsymbol{\mathrm{p}}\mathrm{(}t\mathrm{)})$
are activation energies of the healing. We use here the random truncation
of a continuous healing process. If the component of the stress tensor
normal to the microcrack plane is compressive, the crack is closed.
During the healing of the closed crack its opposite sides stick together
and their relative displacement is remembered and carries this contribution
to the residual strain. If the component of the stress tensor normal
to the microcrack plane is tensile, the crack is open. Its healing
is the filling of the cavity with molecules from the host material.
The strain brought about by the relative displacement of crack sides
is remembered but the volume part of the strain disappears, because
the density of the material remains unchanged.

.We suppose that the rate of healing is proportional to the rate of
diffusion. The diffusion coefficient $D$ is specified by the Hevesy
formula (see, e.g., \citet{Frenkel })

\[
D=D_{0}\exp\left\{ -\beta E_{a}\right\} ,
\]
where $D_{0}$ is constant and $E_{a}$ is activation energy. In our
case activation energies $U_{k}(\sigma;\,\boldsymbol{p}\mathrm{(}t\mathrm{)})$
depend in general on the stress normal to the microcrack plane and
are different for open and closed cracks. It is convenient to assume
that a crack exists without change until the moment of healing, at
which it disappears together with its contribution to the stress field.

To define the intensity $\mathrm{\mathrm{\Lambda}_{\mathit{k}}\equiv\Lambda_{\mathit{k}}}\left(\boldsymbol{p}\mathrm{(}t\mathrm{)}\right)$
of microcrack birth let's assume that any cell contains $K$ types
of microdefects with molecular size. A defect of each type can lose
its stability and become a microcrack of the same type.

The intensity $\mathrm{\lambda}_{\mathit{k}}$ has the form

\begin{equation}
\mathrm{\Lambda_{\mathit{k}}}=c_{1k}\;\exp\left\{ -\beta\left[H-E_{k}^{0}(\sigma;\,\boldsymbol{p}\mathrm{(}t\mathrm{)})\right]\right\} \label{eq:3}
\end{equation}
where $c_{1k}$ are constants, $H$ is activation energy (that is,
the thermofluctuative elastic energy threshold where a microdefect
loses stability and becomes a microcrack), $E_{k}^{0}(\sigma;\,\boldsymbol{p}\mathrm{(}t\mathrm{)})$
is the additional elastic energy brought in by a $k$-type microdefect.

\section{MEAN STRESS FIELD }

The stress field in a body with an arbitrary microcrack system cannot
be represented explicitly. Therefore we use an approximation; roughly
speaking, we assume that a reduction of stress in the neighborhoods
of cracks is compensated by an increase of stress outside of these
neighborhoods, the mean stress over the volume being kept equal to
its value at the boundary. (A similar approach is used in problems
arising in breaking of ropes composed of many wires and also in strength
models for solid bodies under axial tension.)

We assume that the stress tensor outside of spheres of radii $r$
circumscribed around microcracks is uniform and is specified by an
effective tensor $\overline{\sigma}$ . Inside of these spheres the
stress is also uniform and differs from $\overline{\sigma}$ by some
zero components when they vanish at the crack surface . It is clear
that we only approximate a continuous stress field by discontinuous
functions and by no means assume that stress undergoes actual discontinuities
at the surface of the spheres.

Let us choose coordinates $x_{1}^{i}$, $x_{2}^{i}$, and $x_{3}^{i}$
for $i$-type microcracks in such a way that the $x_{1}^{i}$ axis
is perpendicular to microcrack planes. Denote by $\mathbf{A}_{kl}^{(i)}$
the matrix elements specifying the transformation $\mathbf{A}^{(i)}$
of coordinates $x_{1}$, $x_{2}$, $x_{3}$ to $x_{1}^{i}$, $x_{2}^{i}$,
$x_{3}^{i}$. Let $\boldsymbol{\mathbf{\mathbf{T}}}^{(i)}\overline{\sigma}$
: $(\boldsymbol{\mathbf{\mathbf{T}}}^{(i)}\overline{\sigma})_{kl}=\mathbf{A}_{km}^{(i)}\mathbf{A}_{ln}^{(i)}\overline{\sigma}_{mn}$
be the tensor $\overline{\sigma}$ in the new coordinates (a notational
convenience introduced by Einstein will be used hear: tensor sums
are taken over all repeated subscripts). We introduce the piecewise
linear operator $\mathtt{\boldsymbol{\mathbb{\mathbb{\mathbf{S}}}}}$
by the rule: $(Sa)_{kl}=S_{kl}a_{kl}$ (in this unique case the summation
over repeated subscripts isn't made) for any symmetric tensor $a$,
where $\mathtt{\boldsymbol{\mathbb{\mathbb{\mathbf{S}}}}}=\left|\begin{array}{ccc}
0 & 0 & 0\\
0 & 1 & 1\\
0 & 1 & 1
\end{array}\right|$ if $a_{11}\geq0$, $\mathtt{\boldsymbol{\mathbb{\mathbb{\mathbf{S}}}}}=\left|\begin{array}{ccc}
1 & 0 & 0\\
0 & 1 & 1\\
0 & 1 & 1
\end{array}\right|$ if $a_{11}<0$. The case $a_{11}\geq0$ corresponds to the tensile
normal stress applied on the plane of an open crack. The case $a_{11}<0$
corresponds to the compressive normal stress applied on the plane
of a closed crack. Let us define an operator $\boldsymbol{\mathbf{\mathbf{Q}}}^{(i)}$
which is applied to $\overline{\sigma}$ by

.
\begin{eqnarray*}
\boldsymbol{\mathbf{\mathbf{Q}}}^{(i)}\overline{\sigma}=\boldsymbol{\mathbf{\mathbf{T}}}^{(i)-1}\mathbf{S}\boldsymbol{\mathbf{\mathbf{T}}}^{(i)}\overline{\sigma}.
\end{eqnarray*}

Piecewise linear operators $\boldsymbol{\mathbf{\mathbf{Q}}}^{(l)}$
specifies stress in the neighborhood of cracks and remove those stress
tensor components that are tangent to the crack plane and normal to
the crack plane component if it is positive, i.e. tensile (it is the
case of an ``open'' crack).

We assume that $\theta=\frac{4\pi r^{3}}{3a^{3}},$ then $\theta p_{i}(t)$
is the relative volume with stress $\boldsymbol{\mathbf{\mathbf{Q}}}^{(i)}\overline{\sigma}$,
and the tensor $\overline{\sigma}$ is the solution of the system
of equations

\noun{
\begin{gather}
\left[\mathbf{I}-\sum_{i=1}^{K}\theta p_{i}(t)\left(\mathbf{I}-\boldsymbol{\mathbf{\mathbf{Q}}}^{(i)}\right)\right]\overline{\sigma}=\sigma.\label{eq:4}
\end{gather}
}

\section{ELASTIC ENERGY}

The density of elastic energy $e(\sigma)$ in a homogeneous body under
stress $\sigma\equiv\sigma_{ij}$ is given by 
\[
e(s)=\frac{(\sigma\mathbf{:}\varepsilon)}{2}=\frac{(\sigma\mathbf{:}\mu\sigma)}{2}=\frac{(\lambda\varepsilon\mathbf{:}\varepsilon)}{2}
\]
(\citet{Landau-1}), where $\varepsilon\equiv\varepsilon_{kl}$ is
the elastic strain tensor, the tensor of the fourth order $\lambda\equiv\lambda_{ijmn}$
is the stiffness tensor, $\mu\equiv\mu_{ijmn}$ is inverse for $\lambda$
elastic pliability tensor, $\sigma=\lambda\varepsilon$ ($\sigma_{ij}=\lambda_{ijmn}\varepsilon_{mn}$)
is generalized Hooke's law, $(\sigma\mathbf{:}\varepsilon)\equiv\sigma_{ij}\varepsilon_{ij}$
is double inner product of tensors. (We use in this paper the approximation
of small deformations, where Hooke's law is fulfilled at all values
of the strain tensor.)

In a homogeneous isotropic body under the stress $\sigma$ the density
of elastic energy $\tilde{e}(\sigma)$ has the form

\begin{gather*}
\tilde{e}(\sigma)=\frac{1}{2E}(\sigma_{11}^{2}+\sigma_{22}^{2}+\sigma_{33}^{2})+\frac{1+\nu}{E}(\sigma_{12}^{2}+\sigma{}_{23}^{2}+\sigma_{31}^{2})
\end{gather*}
where $E$ is Young's modulus and $\nu$ is Poisson's ratio. Let $A$
be a body constructed from a cubic set of cells in $\mathbb{Z_{\mathrm{a}}^{\mathrm{3}}}$,
$n_{k}$ be the number of cells with $k$-type cracks, $\boldsymbol{p}^{(A)}=(p_{1}^{(A)},\ldots,p_{K}^{(A)}),$
$p_{k}^{(A)}=\frac{n_{k}}{|A|},\:\rho_{k}^{(A)}=\theta p_{k}^{(A)}$
. It is easy to see that the elastic energy $E_{A}(\sigma;\,\boldsymbol{p}^{(A)})$
of the body with cracks under the stress $\sigma$ is given by

\begin{gather*}
E_{A}(\sigma;\,\boldsymbol{p}^{(A)})=a^{3}|A|e_{A}(\sigma;\,\boldsymbol{p}^{(A)})=\\
=a^{3}|A|\left\{ \left[1-\sum_{k=1}^{K}\rho_{k}^{(A)}\right]\tilde{e}(\overline{\sigma})+\sum_{k=1}^{K}\rho_{k}^{(A)}\tilde{e}(\mathbf{Q}^{(k)}\overline{\sigma})\right\} =\\
=a^{3}|A|\left\{ \left[1-\sum_{k=1}^{K}\rho_{k}^{(A)}\right]\overline{\sigma}_{ij}\mu_{ijkl}^{0}\overline{\sigma}_{kl}+\sum_{k=1}^{K}\rho_{k}^{(A)}(\mathbf{Q}^{(k)}\overline{\sigma})_{ij}\mu_{ijkl}^{0}(\mathbf{Q}^{(k)}\overline{\sigma})_{kl}\right\} 
\end{gather*}
and the density of elastic energy for the infinite body
\begin{gather*}
e\equiv e(\sigma;\,\boldsymbol{p})=\\
=\underset{|A|\rightarrow\infty}{\lim}e_{A}(\sigma;\boldsymbol{p}^{(A)})=\left[1-\sum_{k=1}^{K}\rho_{k}^{(A)}\right]\tilde{e}(\overline{\sigma})+\sum_{k=1}^{K}\rho_{k}^{(A)}\tilde{e}(\mathbf{Q}^{(k)}\overline{\sigma})
\end{gather*}
 The elastic energy $E_{k,A}(\sigma;\,\boldsymbol{p}^{(A)})$ added
by a $k$-type crack has the form 
\begin{gather*}
E_{k,A}(\sigma;\,\boldsymbol{p}^{(A)})=a^{3}|A|\left[e_{A}(\sigma;\, p_{1}^{(A)},\ldots,p_{k}^{(A)}+\frac{1}{|A|},...,p_{K}^{(A)})-e_{A}(\sigma;\,\boldsymbol{p}^{(A)})\right]=\\
=a^{3}\frac{\partial e_{A}(\sigma;\,,\boldsymbol{p}^{(A)})}{\partial p_{k}^{(A)}}+O(|A|^{-1})
\end{gather*}
 and for the infinite lattice 
\[
E_{k}(\sigma;\,\boldsymbol{p})=a^{3}e_{k}=a^{3}\frac{\partial e}{\partial p_{k}},
\]
where $e_{i}\equiv e_{i}(\sigma;\, p_{t}(j),\, j=1,...,I)$ is the
density of elastic energy in the cell with a $i$-type crack. To express
intensity (\ref{eq:3}) explicitly we define the energy of microdefects
generating cracks in a similar form

\begin{eqnarray*}
E_{i}^{0}(\sigma;\,\boldsymbol{p})=a_{0}^{3}e_{i}
\end{eqnarray*}
 where $a_{0}$ is of the order of intermolecular distance.

The density of elastic energy can be expressed in the form

\begin{eqnarray}
e & = & \frac{1}{2}\mu_{ijmn}(\sigma;\,\boldsymbol{p})\sigma_{ij}\sigma_{mn}\equiv\frac{1}{2}(\sigma\mathbf{:}\mu\sigma),\label{eq:6}
\end{eqnarray}
 where $\mu$ is the effective elastic pliability tensor, $\mu_{ijmn}\equiv\mu_{ijmn}(\sigma;\,\boldsymbol{p})$.
As the component $s_{11}$ of the operator $\mathtt{\boldsymbol{\mathbb{\mathbb{\mathbf{S}}}}}$
has the jump when the normal stress applied on the plane of the crack
changes the sign, the dependence $\overline{\sigma}$ on $\sigma$
is piecewise linear and $\mu_{ijmn}$ are step functions of $\sigma.$ 

Similarly, 
\begin{align}
e_{k} & =\frac{1}{2}\mu_{klmn}^{k}\sigma_{kl}\sigma_{mn}\equiv\frac{1}{2}(\sigma:\mu^{k}\sigma),\label{eq:7}
\end{align}
 where tensor components $\mu_{ijmn}^{k}\equiv\mu_{ijmn}^{k}(\sigma;\,\boldsymbol{p})$
are step functions of $\sigma.$

\section{EQUATIONS OF NON-EQUILIBRIUM THERMODYNAMICS }

Consider deformation in a body under stress. A tensor $u_{ij}$ of
the total macroscopic strain consists of the tensor of reversible
(it vanishes if the stress is zero) elastic strain and the tensor
of irreversible residual strain $r_{ij}$=$u{}_{ij}-\varepsilon_{ij}$
appearing when microcracks are healing.

The total macroscopic strain was defined previously \citet{Gertzik 1998}
by 
\begin{eqnarray*}
u_{ij}=\underset{n\rightarrow\infty}{\lim}\frac{1}{|A_{n}|a^{3}}\underset{S}{\int}\,\frac{1}{2}(X_{i}n_{j}+X_{j}n_{i})ds
\end{eqnarray*}
where $A_{n}$ are cubes with centers at origin, $A_{n}\subseteq A_{n+1}$,
$\cup_{n=1}^{\infty}A_{n}=\mathbb{Z}^{\nu},$ $X_{i}$ are displacement
components, $n_{i}$ are normal components of the surface, and the
integral is taken over the surface $S$ of the cube. It has been shown
(see (\ref{eq:6})) that macroscopic elastic strain $\varepsilon_{ij}$
has the form 
\begin{gather}
\varepsilon_{ij}=\frac{\partial e}{\partial\sigma_{ij}}=\mu_{ijmn}\sigma_{mn}\label{eq:8}
\end{gather}
 if the crack configuration is fixed.

During the time interval $dt$ the density of $k$-type crack increases
by $\left[1-\sum_{l=1}^{K}p_{l}\right]\mathrm{\Lambda_{\mathit{k}}}dt$
and the strain increment $du_{ij}^{k(+)}$, in accordance with (\ref{eq:7}),
has the form 
\begin{gather}
du_{ij}^{k(+)}=\mu_{ijmn}^{k}\sigma_{mn}\left[1-\sum_{l=1}^{K}p_{l}(t)\right]\mathrm{\Lambda_{\mathit{k}}}dt\label{eq:11-1}
\end{gather}
We remind that the stress tensor $\sigma_{ij}$ can be expressed as
the sum of two other stress tensors: the mean hydrostatic stress tensor
$\bar{\sigma}\delta_{ij}$ which tends to change the volume of the
stressed body, and the deviatoric component called the stress deviator
tensor, $s_{ij}$ which tends to distort it: $\sigma_{ij}=s_{ij}+\bar{\sigma}\delta_{ij}$
where $\bar{\sigma}$ is the mean stress given by $\bar{\sigma}=\frac{\sigma_{ii}}{3}$.
After some crack has been healed, no work is done by the stress deviator
and the strain deviator does not change (the displaced sides of a
closed crack stick together and the strain caused by it becomes residual
instead of elastic strain) but every diagonal element of the strain
tensor $u_{ij}$ decreases by 1/3 of volume strain due to the crack
(as the density of the material remains constant, the volume strain
due to the open crack disappears), that is, the strain decrement $du_{ij}^{k(-)}$,
in accordance with (\ref{eq:7}) is of the form

\begin{eqnarray*}
du_{ij}^{k(-)}=\tilde{\mu}_{ijmn}^{k}\sigma_{mn}p_{k}(t)\mathrm{M_{\mathit{k}}}dt
\end{eqnarray*}
 where

\[
\tilde{\mu}_{ijmn}^{k}=\begin{cases}
\delta_{ij}\mu_{ijmn}^{k}/3,\:\mathrm{if\: crack\: is\: open,}\\
0,\:\mathrm{if\: crack\: is\: closed}.
\end{cases}
\]

The increment $du_{ij}\vert_{\sigma}$ of the strain $\varepsilon_{ij}$
during time interval $dt$ for a constant $\sigma$ is represented
by
\begin{gather*}
du_{ij}\vert_{\sigma}=\sum_{k=1}^{K}[du_{ij}^{k(+)}-du_{ij}^{k(-)}]=\\
=\sum_{k=1}^{K}\left(\mu_{ijmn}^{k}\sigma_{mn}\left[1-\sum_{l=1}^{K}p_{l}(t)\right]\mathrm{\Lambda_{\mathit{k}}}-\tilde{\mu}_{ijmn}^{k}\sigma_{mn}p_{k}(t)\mathrm{M_{\mathit{k}}}\right)dt
\end{gather*}
 The last expression and (\ref{eq:8}) yield the total strain increment
\begin{gather}
du=\mu d\sigma+\sum_{k=1}^{K}\left(\left[1-\sum_{l=1}^{K}p_{l}(t)\right]\mathrm{\Lambda_{\mathit{k}}}\mu^{k}-p_{k}(t)\mathrm{M_{\mathit{k}}}\tilde{\mu}^{k}\right)\sigma dt\label{eq:yy}
\end{gather}
 which shows that appearance and disappearance of cracks give rise
to viscous strain component. 

Using (\ref{eq:9}) and the relation

\begin{gather*}
d\mu_{ijmn}=d\left(\frac{\partial e}{\partial\sigma_{ij}\partial\sigma_{mn}}\right)=\frac{\partial^{2}}{\partial\sigma_{ij}\partial\sigma_{mn}}\sum_{k=1}^{K}\frac{\partial e}{\partial p_{k}}dp_{k}=\\
=\sum_{k=1}^{K}\frac{\partial^{2}}{\partial\sigma_{ij}\partial\sigma_{mn}}e_{k}dp_{ik}=\sum_{k=1}^{K}\mu_{ijmn}^{k}dp_{k},
\end{gather*}
 i.e.
\begin{eqnarray}
d\mu=\sum_{k=1}^{K}\mu^{k}\left(\left[1-\sum_{l=1}^{K}p_{l}(t)\right]\mathrm{\Lambda_{\mathit{k}}}-p_{k}(t)\mathrm{M_{\mathit{k}}}\right)dt\label{eq:try}
\end{eqnarray}
we have

\[
d\varepsilon=\mu d\sigma+\sum_{k=1}^{K}\mu^{k}\left(\left[1-\sum_{l=1}^{K}p_{l}(t)\right]\mathrm{\Lambda_{\mathit{k}}}-p_{k}(t)\mathrm{M_{\mathit{k}}}\right)\sigma dt
\]
and

\[
dr=\sum_{k=1}^{K}\left(\mu^{k}-\tilde{\mu}^{k}\right)p_{k}(t)\mathrm{M_{\mathit{k}}}\sigma dt.
\]

When some procedure of strain change is chosen as a prescribed external
condition, then the last formula leads to the equation for external
stress 
\begin{gather}
d\sigma=\lambda\left\{ du-\sum_{k=1}^{K}\left(\left[1-\sum_{l=1}^{K}p_{l}(t)\right]\mathrm{\Lambda_{\mathit{k}}}\mu^{k}-p_{k}(t)\mathrm{M_{\mathit{k}}}\tilde{\mu}^{k}\right)\sigma dt\right\} \label{eq:10-1}
\end{gather}
 where $\lambda$ is the inverse of $\mu$ the effective stiffness
tensor (in general, anisotropic) for a body with cracks.

The crack surface energy is represented by $\pi r^{2}\gamma,$ where
$\gamma$ is the density of surface energy. The density of internal
energy for the body (apart from an additive constant independent of
external conditions and crack densities) is the sum of the density
of elastic energy, surface crack energy, and thermal (vibrational)
energy of atoms. The first law of thermodynamics can be written in
that case, in view of (\ref{eq:6}), as 
\begin{gather*}
\frac{1}{2}d(\sigma\cdot\mu\sigma)+\gamma\frac{\pi r^{2}}{a^{3}}\sum_{k=1}^{K}\left(\left[1-\sum_{l=1}^{K}p_{l}(t)\right]\mathrm{\Lambda_{\mathit{k}}}-p_{k}(t)\mathrm{M_{\mathit{k}}}\right)dt+\rho C_{p}dT=\\
=(\sigma\cdot du)+dQ
\end{gather*}
 where $\rho$ is the density of the material, $C_{p}$ is specific
heat at constant pressure, and $dQ$ is the specific heat energy increment
from the outside. Using (\ref{eq:yy},\ref{eq:9},\ref{eq:try}) and
the equality $\mu_{ijmn}=\mu_{mnij}$ we get the expression for the
temperature increment: 
\begin{gather}
\rho C_{p}dT=dQ+\sum_{k=1}^{K}\left[\frac{1}{2}(\sigma\cdot\mu^{k}\sigma)-\gamma\frac{\pi r_{i}^{2}}{a^{3}}\right]\left[1-\sum_{l=1}^{K}p_{l}(t)\right]\mathrm{\Lambda_{\mathit{k}}\mathit{dt}}+\nonumber \\
+\sum_{k=1}^{K}\biggr[\frac{1}{2}(\sigma\cdot\mu^{k}\sigma)+\gamma\frac{\pi r_{i}^{2}}{a^{3}}-(\sigma\cdot\tilde{\mu}^{k}\sigma)\biggr]p_{k}(t)\mathrm{M_{\mathit{k}}\mathit{dt}}\label{eq:8-1-1}
\end{gather}
Equations (\ref{eq:55}), (\ref{eq:9}), (\ref{eq:10-1}) and (\ref{eq:8-1-1})
with (\ref{eq:2}), (\ref{eq:3}) constitute a closed system describing
the non-equilibrium thermodynamics of a deformation process in a solid
body (as usual, superior dots denote differentiation with respect
to time):

\begin{equation}
\frac{dp_{k}(t)}{dt}=\left[1-\sum_{i=1}^{K}p_{l}(t)\right]\mathrm{\Lambda_{\mathit{k}}}-p_{k}(t)\mathrm{M_{\mathit{k}}},\: k=1,...,K\label{eq:12}
\end{equation}

\begin{equation}
\dot{\sigma}=\lambda\left\{ \dot{u}-\sum_{k=1}^{K}\left(\left[1-\sum_{l=1}^{K}p_{l}(t)\right]\mathrm{\Lambda_{\mathit{k}}}\mu^{k}-p_{k}(t)\mathrm{M_{\mathit{k}}}\tilde{\mu}^{k}\right)\sigma\right\} \label{eq:13}
\end{equation}
\begin{gather}
\rho C_{p}\dot{T}=\left[1-\sum_{l=1}^{K}p_{l}(t)\right]\mathrm{\Lambda_{\mathit{k}}}\sum_{k=1}^{K}\left[\frac{1}{2}(\sigma\cdot\mu^{k}\sigma)-\gamma\frac{\pi r_{i}^{2}}{a^{3}}\right]+\nonumber \\
+\sum_{k=1}^{K}\biggr[\frac{1}{2}(\sigma\cdot\mu^{k}\sigma)+\gamma\frac{\pi r_{i}^{2}}{a^{3}}-(\sigma\cdot\tilde{\mu}^{k}\sigma)\biggr]p_{k}(t)\mathrm{M_{\mathit{k}}+G}\label{eq:dgf}
\end{gather}
 where $G$ is the heat influx rate per unit volume. It is necessary
to notice that any nonzero stress does a work that dissipates irreversibly
and the system is not equilibrium even in the stationary case.

\section{POWER OF ACOUSTIC EMISSION}

According to (\ref{eq:7}), (\ref{eq:11-1}) the external forces $\sigma$
make at occurrence of cracks the differential work

\begin{gather*}
\sum_{k=1}^{K}\sigma_{ij}du_{ij}^{k(+)}=\left[1-\sum_{k=1}^{K}p_{k}(t)\right]\sum_{k=1}^{K}\sigma_{ij}\mu_{ijmn}^{k}\sigma_{mn}\mathrm{\Lambda_{\mathit{k}}}dt=\\
=\left[1-\sum_{k=1}^{K}p_{k}(t)\right]\sum_{k=1}^{K}(\sigma:\mu^{k}\sigma)\mathrm{\Lambda_{\mathit{k}}}dt.
\end{gather*}
and the increment of density of elastic energy due to new microcracks
is equal to $\frac{1}{2}\left[1-\sum_{k=1}^{K}p_{k}(t)\right]\sum_{k=1}^{K}(\sigma:\mu^{k}\sigma)\mathrm{\Lambda_{\mathit{k}}}dt$.
Thus only half of the work goes elastic energy and the second half
dissipates through acoustic emission and subsequently becomes heat.
This is caused by spasmodic changes in the effective elastic moduli
of the cells where cracks appear. Therefore the power of acoustic
emission per unit volume $w$ is

\section*{{\normalsize{
\begin{gather}
w=\frac{1}{2}\left[1-\sum_{k=1}^{K}p_{k}(t)\right]\sum_{k=1}^{K}(\sigma:\mu^{k}\sigma)\mathrm{\Lambda_{\mathit{k}}.}\label{eq:er}
\end{gather}
}}}

\section{DYNAMICS OF A CONTINUUM }

To derive dynamic equations for continuum from equations for a material
point we introduce velocities $\boldsymbol{v}\equiv v_{k}(\boldsymbol{x},t)$,
$k=1$, $2$, $3$ in three-dimensional space, $\boldsymbol{x}=(x_{1},x_{2},x_{3})$.
Euler vector coordinates $\boldsymbol{x}\equiv\boldsymbol{x}(\boldsymbol{X},t)$
are functions of Lagrange (material) coordinates $\boldsymbol{X}$
of a continuum point in some initial configuration (see, e.g., \citet{Day})
and the velocity of the point is $\boldsymbol{v}\equiv\boldsymbol{v}(\boldsymbol{X},t)=\frac{\partial\boldsymbol{x}}{\partial t}$
. Let $f$ be a local parameter of the continuum. This can be considered
to be both $f(\boldsymbol{X},t)$ and $f(\boldsymbol{\boldsymbol{x}},t)$.

The full derivative of $f$ with respect to time $\dot{f}\equiv\partial f(\boldsymbol{X},t)/\partial t$
is equal to

\[
\dot{f}\equiv\partial f(\boldsymbol{x}(\boldsymbol{X},t),t)/\partial t+\sum_{i=1}^{3}\frac{\partial f(\boldsymbol{x},t)}{\partial x{}_{i}}\frac{\partial x_{i}}{\partial t}=\frac{\partial f(\boldsymbol{x},t)}{\partial t}+(\boldsymbol{v}\cdot\nabla f(\boldsymbol{x},t))
\]
for any differentiable function $f$. 

To calculate the strain rate we use the small strain approximation
(the finite strain theory is too bulky to develop it here). In this
approximation the Lagrange and Euler coordinates are identical and
the small displacement vector $U_{i}(x,t)$,$\, i=1,2,3,$ is defined
at each point. Symmetrized velocity gradient $\nabla^{(s)}\boldsymbol{v}=\frac{1}{2}(\frac{\partial v_{k}}{\partial x_{l}}+\frac{\partial v_{l}}{\partial x_{k}})$
is by definition $u_{lk}=\frac{1}{2}\left(\frac{\partial U{}_{l}}{\partial x_{k}}+\frac{\partial\boldsymbol{U}_{k}}{\partial x_{l}}\right)$
equal to the strain rate tensor $\nabla^{(s)}\boldsymbol{v}=\frac{\partial}{\partial t}u$.

The law of conservation of momenum (Newton's second law) takes the
form 
\begin{eqnarray*}
\rho\frac{\partial\boldsymbol{v}}{\partial t}=-\rho(\boldsymbol{v}\cdot\nabla\boldsymbol{v}){\rm +div}\;\sigma+\rho F
\end{eqnarray*}
 where $\rho$ is the density and $F$ is the force per unit mass.
Using (\ref{eq:12}) we get a kinetic equation for crack densities
\[
\frac{\partial p_{k}}{\partial t}=-(\boldsymbol{v}\cdot\nabla p_{k})+\left[1-\sum_{i=1}^{K}p_{l}(t)\right]\mathrm{\Lambda_{\mathit{k}}}-p_{k}\mathrm{M_{\mathit{k}}},\: k=1,...,K
\]
 Equations (\ref{eq:13}) lead to equations of state for the continuum:
\begin{eqnarray*}
\frac{\partial\sigma}{\partial t} & = & -(\boldsymbol{v}\cdot\nabla\sigma)+\lambda\biggr\{\nabla^{(s)}\boldsymbol{v}-\sum_{k=1}^{K}\left(\left[1-\sum_{l=1}^{K}p_{l}\right]\mathrm{\Lambda_{\mathit{k}}}\mu^{k}-p_{k}M_{k}\tilde{\mu}^{k}\right)\biggr\}
\end{eqnarray*}
Lastly, we get the equation of heat conduction with heat sources using
the law of conservation of energy in the form (\ref{eq:dgf}) and
the Fourier law $q=\kappa\nabla T$ where $q$ is the heat flux vector,
$G={\rm div}\; q,\;\kappa$ is thermal conductivity:
\begin{gather*}
\frac{\partial T}{\partial t}=-(\boldsymbol{v}\cdot\nabla T)+(\rho C_{p})^{-1}\left\{ \left[1-\sum_{l=1}^{K}p_{l}(t)\right]\mathrm{\Lambda_{\mathit{k}}}\sum_{k=1}^{K}\left[\frac{1}{2}(\sigma\cdot\mu^{k}\sigma)-\gamma\frac{\pi r_{i}^{2}}{a^{3}}\right]+\right.\\
+\left.\sum_{k=1}^{K}\biggr[\frac{1}{2}(\sigma\cdot\mu^{k}\sigma)+\gamma\frac{\pi r_{i}^{2}}{a^{3}}-(\sigma\cdot\tilde{\mu}^{k}\sigma)\biggr]p_{k}(t)\mathrm{M_{\mathit{k}}}\right\} +\chi\Delta T,
\end{gather*}
 $\chi=\kappa(\rho C_{p})^{-1}$ is thermal diffusivity.

\section{PHASE TRANSITION}

Let $\sigma_{12}=\sigma_{21}\geq0$ be external shear stress applied
to the solid, $\sigma=\sigma_{12}/\sqrt{\mu_{s}},$ $\mu_{s}$ is
a the shear modulus, $U(\sigma;\,\boldsymbol{p}\mathrm{(}t\mathrm{)})=U$,
$U$ is a constant, and allowed cracks have their normal parallel
to axes $x_{1}$ and $x_{2}$. Then (\ref{eq:2}) - (\ref{eq:4})
give 

\begin{gather*}
\overline{\sigma}=\dfrac{\sigma}{1-\theta p},\\
\Lambda_{\mathit{\mathrm{1}}}=\Lambda_{2}=\Lambda=c_{1}\exp\left\{ \beta\left[a_{0}\left(\dfrac{\sigma}{1-\theta p}\right)^{2}-H\right]\right\} ,\\
\mathrm{M{}_{\mathit{\mathrm{1}}}=M_{2}=M=c_{0}\exp\left\{ -\beta U,\right\} }
\end{gather*}
and the two types of cracks are indistinguishable in (\ref{eq:12}).
We introduce their total density $p$, and get from (\ref{eq:12})
the equation for stationary case
\begin{gather*}
c_{1}\left(1-p\right)\;\exp\left\{ \beta\left[a_{0}\left(\dfrac{\sigma}{1-\theta p}\right)^{2}-H\right]\right\} -c_{0}p\;\exp\left\{ -\beta U\right\} =0.
\end{gather*}
From this we get the dependence of $\sigma$ on $\mathit{\lyxmathsym{\textcyr{\char240}}}$
and $\beta$

\begin{equation}
\sigma=\sqrt{\frac{1}{a_{0}\beta}(1-\theta p)^{2}\left[\ln\frac{p}{1-p}-\ln\frac{c_{1}}{c_{0}}+\beta(H-U)\right]}\label{eq:66}
\end{equation}
 and $\sigma\geq0$ when 
\[
\frac{c_{1}\exp\left\{ -\beta H\right\} }{c_{1}\exp\left\{ -\beta H\right\} +c_{0}\exp\left\{ -\beta U\right\} }\leq p<1.
\]
As $\sigma$ changes from 0 to $\infty,$ the phase transition takes
place if $\min\frac{d\sigma}{dp}\leq0$. At $\min\frac{d\sigma}{dp}<0$
the dependence of $p$ on $\mathit{\sigma}$ has an $S$-shaped form
(see Fig.1) similar to the van der Waals curve.

\medskip{}

\begin{center}
\begin{figure}[H]
\begin{centering}
\includegraphics[scale=0.35]{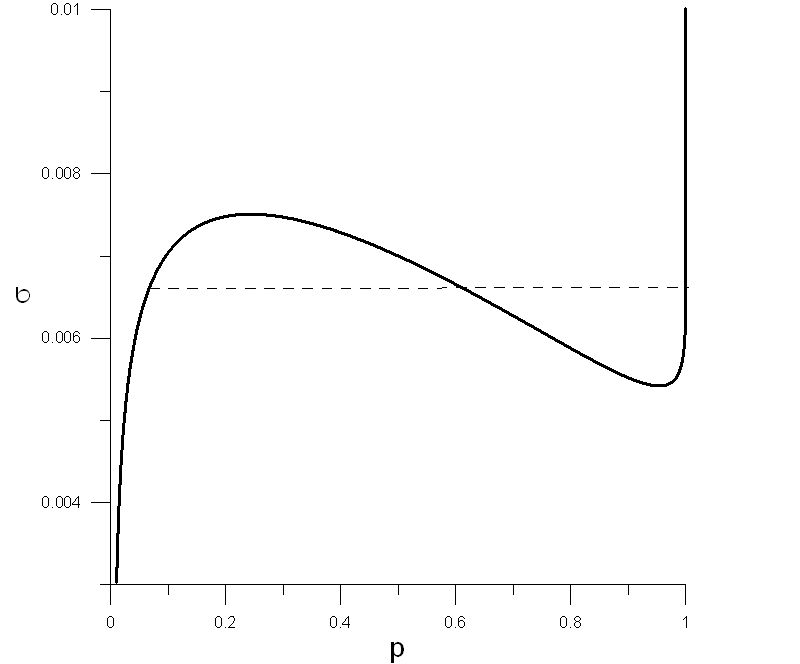}
\par\end{centering}

\centering{}\textbf{\large{Fig. 1}}
\end{figure}

\par\end{center}

\bigskip{}
Following the technique of pioneer work of M.Kac for the Curie-Weiss
model \citet{Kac} we construct from (A.3), (A.4) (see Appendix)the
asymptotic distribution $u_{N}(x)$ of $x=\frac{n}{N}$ at large $N$:

\begin{gather}
\mathrm{\mathit{u}_{\mathit{N}}(}x\mathrm{)\equiv}\mathrm{Pr\{}\mathit{n=xN}\mathrm{\}}=\sum_{A:|A|=xN}P\mathrm{_{\mathit{N}}(}A\mathrm{)\sim}\nonumber \\
\sim\frac{N\mathrm{!}}{\mathrm{(\mathrm{\mathit{N}}}-xN\mathrm{)!}(xN)\mathrm{!}}\frac{1}{Z_{N}\sqrt{\Lambda(y)\mathrm{M}(y)}}\exp\left\{ N\int_{0}^{x}\ln f(z)dz\right\} \sim\nonumber \\
\mathrm{\mathit{\sim}\frac{1}{\mathit{\mathit{Z}}_{\mathit{N}}}\sqrt{\frac{1}{2\pi\mathit{N}(1-\mathit{x})\mathit{x\mathrm{\mathrm{\Lambda}(}y\mathrm{)}\mathrm{M}\mathrm{(}y\mathrm{)}}}}\exp\left\{ \mathit{NF\mathrm{(}x\mathrm{)}}\right\} ,}\label{eq:rrq}
\end{gather}

\[
F(x)=x\ln\frac{c_{1}}{c_{0}}+x\beta\left(U-H+a_{0}\dfrac{\sigma^{2}}{1-\theta x}\right)-x\ln x-(1-x)\ln(1-x),
\]

\begin{gather*}
Z_{N}\sim\int_{0}^{1}\sqrt{\frac{1}{2\pi N(1-x)x\Lambda(y)\mathrm{M}(y)}}\exp\left\{ NF(\boldsymbol{x})\right\} dx.
\end{gather*}

As $N\rightarrow\infty$ the probability measure tends to a measure
concentrated at a points of a strict maximum of $F(x)$ and a first-order
phase transition takes place if $F(x)$ has two identical maxima.
The condition of extremum of $F(x)$ 

\[
\frac{dF}{dx}=\ln\frac{c_{1}}{c_{0}}+\beta\left(U-H\right)+a_{0}\beta\frac{\sigma^{2}}{(1-\theta x)^{2}}-\ln\frac{x}{1-x}=0
\]
is equivalent to (\ref{eq:66}). The requirement of equality of two
maxima $F(x_{1})=F(x_{2})$ of function $F(x)$ yields an analog of
Maxwell equal area rule 
\[
\int_{x_{1}}^{x_{2}}\left[\ln\frac{c_{1}}{c_{0}}+\beta_{c}\left(U-H\right)+a_{0}\beta\frac{\sigma_{c}^{2}}{(1-\theta z)^{2}}-\ln\frac{z}{1-z}\right]dz=0.
\]
.

A second-order phase transition takes place at critical point $(p_{c},\beta_{c},\sigma_{c})$
where $\frac{dF}{dx}$$\mid_{p_{c}}=\frac{d^{2}F}{dx^{2}}$$\mid_{p_{c}}=\frac{d^{3}F}{dx^{3}}\mid_{p_{c}}=0.$
From

\[
\frac{dF}{dx}\mid_{p_{c}}=\ln\frac{c_{1}}{c_{0}}+\beta_{c}\left(U-H\right)+a_{0}\beta\frac{\sigma_{c}^{2}}{(1-\theta p_{c})^{2}}-\ln\frac{p_{c}}{1-p_{c}}=0,
\]

\[
\frac{d^{2}F}{dx^{2}}\mid_{p_{c}}=2a_{0}\beta_{c}\frac{\theta\sigma_{c}^{2}}{(1-\theta p_{c})^{3}}-\frac{1}{p_{c}\left(1-p_{c}\right)}=0,
\]

\[
\frac{d^{3}F}{dx^{3}}\mid_{p_{c}}=6a_{0}\beta_{c}\frac{\theta^{2}\sigma_{c}^{2}}{(1-\theta p_{c})^{4}}+\frac{1-2p_{c}}{p_{c}^{2}\left(1-p_{c}\right)^{2}}=\frac{1-2(1-\theta)p_{c}-\theta p_{c}^{2}}{(1-\theta p_{c})p_{c}^{2}\left(1-p_{c}\right)^{2}}=0,
\]
we find coordinates of the critical point at $H\neq W$

\[
p_{c}=\frac{\sqrt{1-\theta+\theta^{2}}-1+\theta}{\theta},
\]

\begin{gather*}
\beta_{c}=\frac{1}{H-U}\left[\ln\frac{c_{1}}{c_{0}}+\frac{\left(2-\theta-\sqrt{1-\theta+\theta^{2}}\right)\theta}{(2-\theta)\sqrt{1-\theta+\theta^{2}}-2(1-\theta)-\theta^{2}}-\right.\\
\left.-\ln\frac{\sqrt{1-\theta+\theta^{2}}-1+\theta}{1-\sqrt{1-\theta+\theta^{2}}}\right],
\end{gather*}

\[
\sigma_{c}=\sqrt{\frac{1}{a_{0}\beta_{c}}\frac{(1-\theta+\theta^{2})\left(2-\theta-\sqrt{1-\theta+\theta^{2}}\right)\theta}{(2-\theta)\sqrt{1-\theta+\theta^{2}}-2(1-\theta)-\theta^{2}}}.
\]
We derive the same results from (\ref{eq:66}) with $(p,\beta,\sigma)=(p_{c},\beta_{c},\sigma_{c})$
and $\frac{d\sigma}{dx}$$\mid_{p_{c}}=\frac{d^{2}\sigma}{dx^{2}}$$\mid_{p_{c}}=0.$

In the case $H=U$ we have

\[
p_{c}=\frac{\sqrt{1-\theta+\theta^{2}}-1+\theta}{\theta},
\]

\[
\beta_{c}\sigma_{c}^{2}=\frac{(1-\theta p_{c})^{2}}{a_{0}}\left(\ln\frac{p_{c}}{1-p_{c}}-\ln\frac{c_{1}}{c_{0}}\right).
\]
In this case there is the line of critical points $\beta_{c}\sigma_{c}^{2}=const$
on the plane $(\beta,\sigma)$ instead of a single critical point
in the ordinary case.

\section{DISTRIBUTION OF LOGARITHMIC POWER OF ACOUSTIC EMISSION AT THE CRITICAL
POINT FOR A LARGE SYSTEM}

In the conditions of the previous section the power of acoustic emission
$W$ for a solid with $N$ cells has the form (\ref{eq:er})

\[
W=\frac{N}{2}(1-p)\frac{\theta\sigma^{2}}{(1-\theta p)^{2}}c_{1}\exp\left\{ \beta\left[a_{0}\left(\dfrac{\sigma}{1-\theta p}\right)^{2}-H\right]\right\} .
\]
 To find the distribution $\varphi_{N}(y)$ of $y=\ln W$ at the critical
point we notice that the distribution $\mathit{u_{N}}(x\mathrm{)}$
of crack density $x=\frac{n}{N}$ at large $N$, according to (\ref{eq:rrq}),
looks at the critical point like

\begin{gather*}
\mathit{u_{N}}(\mathit{x}\mathrm{)}\sim\frac{1}{\mathit{\mathit{Z}}_{\mathit{N}}}\sqrt{\frac{1}{2\pi\mathit{N}(1-\mathit{x})\mathit{x\mathrm{\Lambda(\mathit{y})\mathrm{M}(\mathit{y})}}}}\exp\left\{ \mathit{NF\mathrm{(}x\mathrm{)}}\right\} \sim\\
\left.\sim\mathit{u_{N}}(\mathit{p_{c}}\mathrm{)}\exp\left\{ \mathit{\frac{N}{\left(2n\right)\mathrm{!}}\frac{d^{2\mathit{\mathit{n}}}F_{c}}{dx^{\mathit{\mathrm{2}\mathit{n}}}}\mid_{p_{c}}\left(x-p_{c}\right)^{\mathrm{2}\mathit{n}}}-\mathit{\frac{\mathrm{1}}{\mathrm{2}}\frac{d}{dx}}\ln\left[\Lambda(x)\mathrm{M}(x)(1-\mathit{x})\mathit{x}\right]\mid_{p_{c}}\left(\mathit{x-p_{c}}\right)\right\} \right\} ,
\end{gather*}
 at the critical point where $n$ is the smallest positive number
at which the derivative $\frac{d^{2\mathit{\mathit{n}}}F_{c}}{dx^{\mathit{\mathrm{2}\mathit{n}}}}$
is distinct from 0 ($\frac{d^{2\mathit{\mathit{n-\mathrm{1}}}}F_{c}}{dx^{\mathit{\mathrm{2}\mathit{n}}-1}}\mid_{p_{c}}=0$
because $\mathit{u_{N}}(\mathit{x}\mathrm{)}$ has a maximum at the
point $p_{c})$. As

\[
\frac{d^{4}F_{c}}{dx^{4}}\mid_{p_{c}}=\frac{-2(1-\theta+\theta p_{c})}{(1-\theta p_{c})p_{c}^{2}\left(1-p_{c}\right)^{2}}\leq0,
\]
 we have

\begin{gather*}
\\
\mathrm{\mathit{u_{N}}(\mathit{x}\mathrm{)}\sim\frac{1}{\mathit{\mathit{Z}}_{\mathit{N}}}\,\exp\left\{ \mathit{NF_{c}\mathrm{(}p_{c}\mathrm{)}}\right\} \exp\left\{ \mathit{\frac{N}{\mathrm{4!}}\frac{d^{\mathrm{4}}F_{c}}{dx^{\mathit{\mathrm{4}}}}\mid_{p_{c}}\left(x-p_{c}\right)^{\mathrm{4}}-\mathit{\frac{\mathrm{1}}{\mathrm{2}}\frac{d}{dx}}\left[\mathrm{\ln\Lambda(\mathit{x})\mathrm{M}(\mathit{x})(1-\mathit{x})\mathit{x}}\right]\mid_{p_{c}}\left(\mathit{x-p_{c}}\right)}\right\} }.
\end{gather*}
 The distribution $\varphi_{N}(y)$ is expressed by $\mathit{u_{N}}(\mathit{x}\mathrm{)}$
as 
\[
\varphi_{N}(y)=\sum_{x:\ln W(x)=y}u(x(y))\left(\frac{dy}{dx}\right)^{-1}=\sum_{x:W(x)=y}u(x(y))W\left(\frac{dW}{dx}\right)^{-1},
\]
 and

\[
\frac{dW}{dx}=W\left[\frac{2\theta}{(1-\theta x)}-\frac{1}{1-x}+\frac{2\alpha\beta\sigma^{2}\theta}{(1-\theta x)^{3}}\right]\equiv Wg(x).
\]
 Therefore $\varphi_{N}(y)=\sum_{x:W(x)=y}\frac{u(x(y))}{g(x)}$ and
at the critical point

\begin{gather*}
\varphi_{N}(y)=\sum_{x:W(x)=y}\frac{u(x(y))}{g(x)}\sim\left\{ C\exp\frac{N}{\mathrm{4!}}\frac{d^{4}F_{c}}{dx^{4}}\mid_{p_{c}}\frac{\mathrm{1}}{g\mathrm{(}p_{c}\mathrm{)^{4}}}\left(y-\ln W_{c}\right)^{4}-\right.\\
\left.-\mathit{\frac{\mathrm{1}}{\mathrm{2}}\frac{d}{dx}}\mathrm{\ln\left[g(x)\Lambda(\mathit{x})\mathrm{M}(\mathit{x})(1-\mathit{x})\mathit{x}\right]}\mid_{p_{c}}\frac{\mathrm{1}}{g\mathrm{(}p_{c}\mathrm{)}}\left(y-\ln W_{c}\right)\right\} ,
\end{gather*}

As the energy $\Delta E$ emitted during a small time $\Delta t$
is $W\Delta t$, the logarithm of its distribution has the same form
as $\ln\varphi_{N}$ 

\begin{equation}
\ln\varphi_{N}=a-b\ln W-c\left(\ln W-\ln W_{c}\right)^{4}.\label{eq:rr}
\end{equation}
This expression is surprisingly similar to the empirical distribution
of earthquakes energy. Theoretically it satisfies to the Gutenberg-Richter
law \citet{Gutenberg} $\ln\frac{N_{E}}{N_{total}}=a-b\ln E$, $N_{E}$
is the number of earthquakes with energy $E$, $N_{total}$ is total
number of earthquakes, but empirical curves deviate from linear law
downwards at both edges just as in (\ref{eq:rr}). Fig. 2 shows example
of a plot of $\varphi_{N}$ , and on Fig. 3 we see the top part of
the plot 0f $\ln\varphi_{N}$ and corresponding linear dependence.\medskip{}

\begin{center}
\includegraphics[scale=0.35]{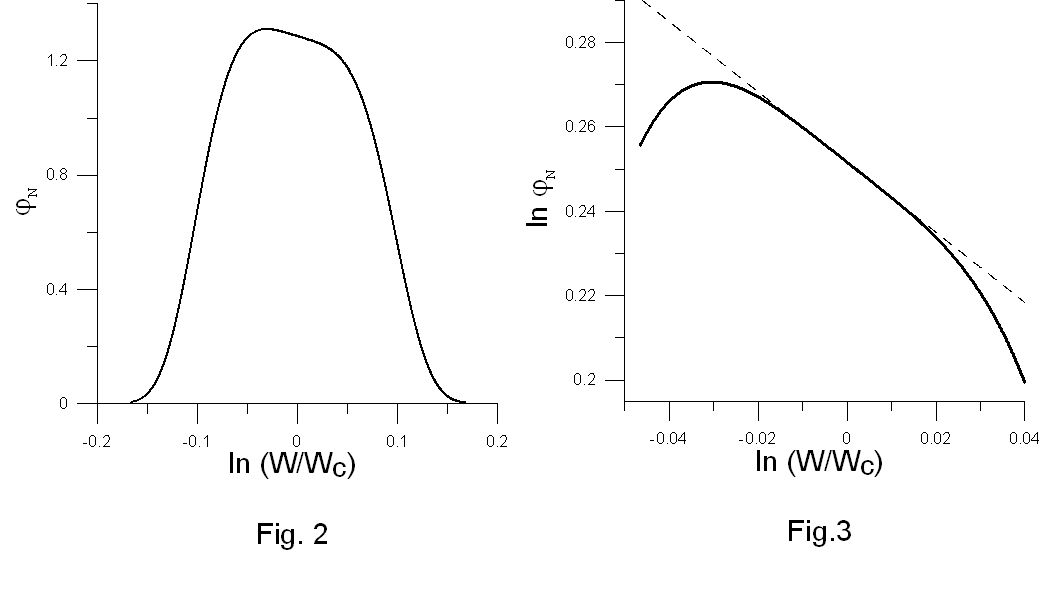}
\par\end{center}

If we assume that the area of the stress reduction is not less then
the sphere inscribed in a cubic cell, than $\frac{\pi}{6}\leq\theta<1.$
and $0.271\leq b<0.5$. The Gutenberg\textendash{}Richter law expresses
the relationship between magnitude $M$ and the number of earthquakes
with this $M.$ The magnitude is expressed through energy in joules
(\citet{Kasahara}) $M=\frac{2}{3}(\lg E-11.8)$. The relationship
between $\frac{2}{3}\lg W$ and $\lg\varphi_{N}$ has the form $\lg\varphi_{N}=a^{\prime}-b^{\prime}\lg W-c^{\prime}\left(\lg W-\lg W_{c}\right)^{4}$
with $0.407\leq b^{\prime}<0.75$. This range of $b^{\prime}$ is
comparable with the range $0.5\leq b^{\prime}<1.5$ observed for the
distribution of earthquake magnitude.

\paragraph*{\medskip{}
}

\section{ZHURKOV'S CURVES}

In \citet{Gertzik } it was demonstrated that the numerical solutions
of equations (\ref{eq:12})-(\ref{eq:dgf}) reproduce a number of
physical properties of solids observed in experiments. These properties
include creep, the existence of lower and upper yield points, strain
hardening, dilatancy, the growth of elastic anisotropy and drop in
the ratio of compressional to shear velocities under loading.

The experimental data underlying Zhurkov's formula (\ref{eq:0-1})
are also reproduced in numerical simulations.

The results of experiments for a tensile load $\sigma$are shown in
Fig.4 and are taken from \citet{Regel}:

\begin{figure}[H]
\includegraphics[scale=0.85]{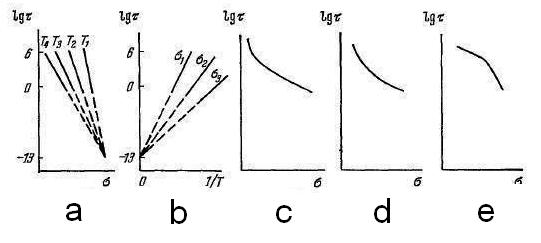}

\begin{centering}
Fig.4 
\par\end{centering}

(a) $\ln\tau$ as a function of $\sigma$ at constant $T$ ,

(b) $\ln\tau$ as a function of $T^{-1}$ at constant $\sigma$,

(c)-(e) deviations of $\ln\tau$ as a function of $\sigma$ from linearity. 
\end{figure}

\begin{flushleft}
The deviations from linearity are explained by some ``complicating
factors''. 
\par\end{flushleft}

In the conditions of two previous sections we get from (\ref{eq:12})
the equation

\[
\frac{dp}{dt}=\left(1-p\right)c_{1}\exp\left\{ \beta\left[a_{0}\left(\dfrac{\sigma}{1-\theta p}\right)^{2}-H\right]\right\} -pc_{0}\exp\left\{ -\beta U\right\} .
\]
 We assume that the lifetime $\tau$ of a specimen under tensile load
$\sigma$ at temperature $T$ is the time for which the density of
microcracks $p$ changes from 0 to the critical value $p_{0}$ (e.g.
$p_{0}=0.2$).

For different sets of constants there is a sufficiently large domain
of $\sigma$ and $T$ where $\ln\tau$ as a function of $\sigma$
and $T$ plot as straight lines similar to experimental data as shown
in Fig.5. This fact allows us to present this numeral result in the
Zhurkov form (\ref{eq:0-1}).

\begin{figure}[H]
\begin{centering}
\includegraphics[scale=0.4]{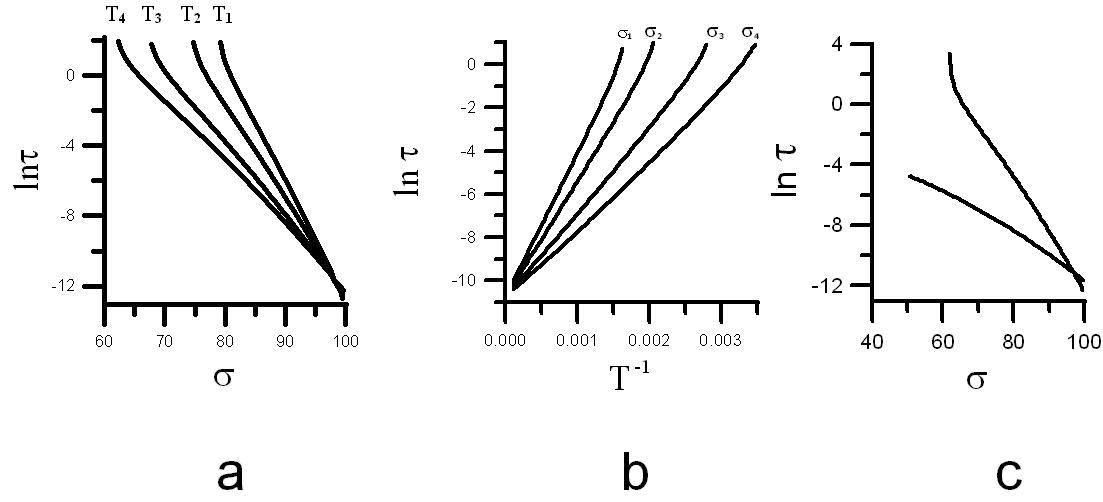} 
\par\end{centering}

\begin{centering}
Fig.5 
\par\end{centering}

(a) $\ln\tau$ as a function of $\sigma$ at constant $T$ ,

(b) $\ln\tau$ as a function of $T^{-1}$ at constant $\sigma$,

(c)-(e) deviations of $\ln\tau$ as a function of $\sigma$ from linearity.
\end{figure}

\begin{flushleft}
As the linearity and the deviations from linearity are present in
the solutions of the equation, the need to introduce ``complicating
factors'' vanishes. 
\par\end{flushleft}

\section*{Appendix}

\section*{ASYMPTOTIC STATIONARY STATE OF THE SIMPLEST MEAN FIELD MARKOV PROCESS}
\begin{defn*}
Let $\xi_{N}(t)\equiv\{\xi_{N}(x,t),$ $x\in\Omega_{N}\},\,\xi_{N}(x,t)=0,1,\,|\Omega_{N}|=N,\, t\text{\ensuremath{\ge}}0,$
($|A|$ denotes a number of elements in $A$) be $N$-component continuous-time
Markov process with state-space $\left\{ 0,1\right\} ^{\Omega_{N}}$.
Denote $n(t)=n(\xi_{N}(t))=|\{x\in\Omega_{N}:\,\xi_{N}(x,t)=1\}|$
and let $\Lambda(y),\mathrm{\, M(\mathit{y})}$ be positive continuous
functions on $[0,1]$. We assume that only one point spin-flip translations
$\left\{ 0\rightarrow1,1\rightarrow0\right\} $ are possible, with
the rates given by conditional probabilities
\[
\begin{cases}
\Pr\{\xi_{N}(x,t+h)=1|\xi_{N}(x,t)=0,n(t)\}=\Lambda\left(\frac{n(t)}{N}\right)h+o(h),\\
\Pr\{\xi_{N}(x,t+h)=0|\xi_{N}(x,t)=1,n(t)\}=\mathrm{M}\left(\frac{n(t)}{N}\right)h+o(h)\;\;\;\;(A.1)
\end{cases}
\]
This process will be named ``\emph{mean field}'' Markov process.
The random process $n(t)$ is also Markov process. For it 

\begin{gather*}
\begin{cases}
\Pr\{n(t+h)=k+1|n(t)=k\}=(N-k)\Lambda\left(\frac{k}{N}\right)h+o(h),\\
\Pr\{n(t+h)=k-1|n(t)=k\}=k\mathrm{M}\left(\frac{k}{N}\right)h+o(h),\\
\Pr\{n(t+h)=k|n(t)=k\}=1-(N-k)\Lambda\left(\frac{k}{N}\right)h-k\mathrm{M}\left(\frac{k}{N}\right)h+o(h)
\end{cases}
\end{gather*}

So for $m(t)=\frac{n(t)}{N}$ we have 
\[
\frac{d<m(t)>}{dt}=<(1-m(t))\Lambda(m(t))-m(t)\mathrm{M}(m(t))>,
\]
where <...> is the mathematical expectation. It is known (\citet{Ethier},
{[}\citet{Malyshev}) that if $N\rightarrow\infty$ and $m(0)$ tends
to a nonrandom limit $p(0)$, then $m(\text{t})$ converges in probability
to $p(\text{t})$ and $p(\text{t})$ is described by equation 
\[
\frac{dp(t)}{dt}=\left[1-p(t)\right]\mathrm{\Lambda}\left(p\mathrm{(}t\mathrm{)}\right)-p(t)\mathrm{M}\left(p(t\mathrm{)}\right)
\]
 It follows from this that if random variables $\xi_{N}(x,0)$ are
independent and equally distributed in an initial time moment, i.e.
they have the Bernoulli distribution with the parameter $p(0)$, than
in the limit $N\rightarrow\infty$ the joint distribution of values
$\xi_{N}(x,t)$ for any $t$ and any finite set of $x$ converges
to the Bernoulli distribution with the parameter $p(t)$. It is based
on the exchangeability of these random variables.
\end{defn*}
Below we'll prove that for the case of the absence of phase transitions
the components $\xi_{N}(x)$ in the stationary state are asymptotically
independent in the thermodynamic limit $N\rightarrow\infty$.

Let $A\subseteq\Omega_{N},$ and 

\begin{gather*}
P_{N}(A,t)=\Pr\{\xi_{N}(x,t)=1,\, x\in A,\,\xi_{N}(x,t)=0,\, x\in\Omega_{N}\setminus A\}.
\end{gather*}
It follows from (A.1) that

\begin{gather*}
\frac{dP_{N}(A,t)}{dt}=\sum_{x\in A}P_{N}(\mathfrak{\mathit{\mathrm{\mathit{A}}}\mathit{\setminus\left\{ x\right\} }},t)\Lambda\left(\frac{|A|-1}{N}\right)-|A|P_{N}(A,t))\mathrm{M}\left(\frac{|A|}{N}\right)+\\
+\sum_{x\in\Omega_{N}\setminus A}P_{N}(\mathfrak{\mathfrak{\mathit{\mathrm{\mathit{A}}}\mathit{\cup\left\{ x\right\} }}},t)\mathrm{M}\left(\frac{|A|+1}{N}\right)-(N-|A|)P_{N}(A,t))\Lambda\left(\frac{|A|}{N}\right).\;\;\;\;(A.2)
\end{gather*}

\smallskip{}

\begin{thm*}
If $\Lambda(y),\mathrm{\, M(y)}$ are strictly positive and differentiable,
$f\left(y\right)=\frac{\Lambda(y)}{\mathrm{M}(y)},$ function
\[
F(y)=\int_{0}^{y}\ln f(z)dz-y\ln y-(1-y)\ln(1-y)
\]
as a single maximum in $p$, $p\in(0,1)$, \textup{(}the absence of
phase transitions\textup{)} and $F^{\prime\prime}(p)$ is negative,
then in the stationary state $\xi(x)$ are independent in the limit
$N\rightarrow\infty$ and $p=\underset{N\rightarrow\infty}{\lim}\mathrm{E_{\mathit{N}}}\left\{ \frac{\mathit{n}(\xi_{N})}{N}\right\} $
is the solution of the equation 
\end{thm*}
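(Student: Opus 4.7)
The plan is to exploit the mean-field symmetry of the process to reduce the stationary distribution on $\{0,1\}^{\Omega_N}$ to a one-dimensional birth–death chain, solve it exactly via detailed balance, extract the $\exp\{NF(x)\}$ asymptotics by Stirling's formula, and finally deduce asymptotic independence by computing marginals.

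First I would observe that, since the rates in (A.1) depend on the configuration only through $n(\xi_N)$, the stationary measure is exchangeable: $P_N(A)$ depends only on $k=|A|$. Setting the left-hand side of (A.2) to zero and using this symmetry, the equation collapses to the detailed balance relation for the birth–death chain $n(t)$ on $\{0,\dots,N\}$, namely
\[
Q_N(k)(N-k)\Lambda(k/N)=Q_N(k+1)(k+1)\mathrm{M}((k+1)/N),
\]
where $Q_N(k)=\binom{N}{k}P_N(k)$. Iterating gives
\[
Q_N(k)=\frac{1}{Z_N}\binom{N}{k}\prod_{j=0}^{k-1}\frac{\Lambda(j/N)}{\mathrm{M}((j+1)/N)}.
\]

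Next I would pass to the large-$N$ asymptotics of $u_N(x)=Q_N(\lfloor xN\rfloor)$. Stirling's formula applied to the binomial coefficient yields the entropy term $-x\ln x-(1-x)\ln(1-x)$ multiplied by $N$, up to the standard $(2\pi N x(1-x))^{-1/2}$ prefactor. The log-product is a Riemann sum for $\int_0^x \ln f(z)\,dz$, and since $\Lambda,\mathrm{M}$ are strictly positive and differentiable on $[0,1]$ the error is $O(1)$, producing a bounded multiplicative correction that matches the $1/\sqrt{\Lambda(y)\mathrm{M}(y)}$ factor appearing in (\ref{eq:rrq}). Assembling these yields precisely
\[
u_N(x)\sim \frac{1}{Z_N}\sqrt{\frac{1}{2\pi N(1-x)x\Lambda(x)\mathrm{M}(x)}}\exp\{NF(x)\}.
\]
Under the hypothesis that $F$ has a unique maximum at $p\in(0,1)$ with $F''(p)<0$, Laplace's method shows that the probability measure $u_N$ concentrates on any neighbourhood of $p$, so $n(\xi_N)/N\to p$ in probability (and in mean, since $n/N$ is bounded). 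The equation solved by $p$ is $F'(p)=0$, i.e. $\ln f(p)=\ln\frac{p}{1-p}$, equivalently $(1-p)\Lambda(p)=p\mathrm{M}(p)$.

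It remains to lift concentration of $n/N$ to asymptotic independence of finitely many coordinates. For any finite $B\subset\Omega_N$, exchangeability gives
\[
\Pr\{\xi_N(x)=1,\,x\in B\}=\sum_{k=|B|}^{N}\binom{N-|B|}{k-|B|}P_N(k)
=\sum_{k}\frac{k(k-1)\cdots(k-|B|+1)}{N(N-1)\cdots(N-|B|+1)}Q_N(k).
\]
Because $Q_N$ concentrates at $k\sim pN$ and the falling-factorial ratio is a continuous function of $k/N$ tending to $p^{|B|}$, the right-hand side converges to $p^{|B|}$; analogous computations for cylinder events mixing $0$s and $1$s give products $p^j(1-p)^{|B|-j}$, establishing the claim. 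The main technical obstacle is controlling the Riemann-sum error in the log-product uniformly in $x\in[0,1]$ so that the Laplace integral can be evaluated around $p$ without boundary effects; this is where strict positivity of $\Lambda,\mathrm{M}$ on the closed interval and the assumption $F''(p)<0$ are indispensable, since they secure both a bounded integrand and a nondegenerate Gaussian saddle.
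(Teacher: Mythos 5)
Your proposal is correct and follows essentially the same route as the paper: an explicit product-form stationary measure (your detailed-balance iteration for the projected birth--death chain yields exactly the paper's $P_N(A)=\frac{1}{Z_N}\exp\{\sum_{m=0}^{|A|-1}\ln f_N(m/N)\}$ with $f_N(y)=\Lambda(y)/\mathrm{M}(y+1/N)$), then Stirling plus a Riemann-sum estimate to obtain the $\exp\{NF(x)\}$ asymptotics, Laplace's method at the unique nondegenerate maximum $p$, and factorization of finite-dimensional marginals via falling-factorial ratios to get $(1-p)^{|A|-|B|}p^{|B|}$. The only presentational difference is that the paper verifies the candidate measure by direct substitution into (A.2) rather than invoking reversibility of the induced chain, which changes nothing of substance.
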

\[
p=\frac{f(p)}{1+f(p)}.
\]
\medskip{}

\begin{proof}
Let's set 
\[
f_{N}\left(y\right)=\frac{\Lambda\left(y\right)}{\mathrm{M\left(\mathit{y+\frac{1}{N}}\right)}}.
\]
By means of direct substitution in (A.2) it is easy to check that
the solution of the system 
\begin{gather*}
\frac{dP_{N}(A,t)}{dt}=0
\end{gather*}
has a form
\end{proof}
\[
P_{N}(A)=\frac{1}{Z_{N}}\exp\left\{ \sum_{m=0}^{_{^{|A|-1}}}\ln f_{N}\left(\frac{m}{N}\right)\right\} ,
\]

\begin{gather*}
Z_{N}=\sum_{A\subseteq\Omega_{N}}\,\exp\left\{ \sum_{m=0}^{_{^{|A|-1}}}\ln f_{N}\left(\frac{m}{N}\right)\right\} =\\
=\sum_{|A|=0}^{N}\,\frac{N!}{(N-|A|)!|A|!}\exp\left\{ \sum_{m=0}^{_{^{|A|-1}}}\ln f_{N}\left(\frac{m}{N}\right)\right\} .
\end{gather*}
We have for any differentiable function $g$ 
\begin{gather*}
\sum_{m=0}^{M-1}g(\frac{m}{N})=N\sum_{m=0}^{M}g(\frac{m}{N})\frac{1}{N}-g(\frac{M}{N})=\frac{g(0)-g(\frac{M}{N})}{2}+\\
+N\sum_{m=1}^{M}\frac{g(\frac{m-1}{N})+g(\frac{m}{N})}{2}\frac{1}{N}=\\
=\frac{g(0)-g(\frac{M}{N})}{2}+N\int_{0}^{\frac{M}{N}}g(z)dz+O(\frac{1}{N}).
\end{gather*}
Using

\begin{gather*}
\ln f_{N}\left(y\right)\sim\ln f\left(y\right)-\frac{1}{N}\mathrm{\mathit{\frac{d}{dy}\ln}M}\left(y\right)
\end{gather*}
we have
\begin{gather*}
\sum_{m=0}^{_{^{|A|-1}}}\ln f_{N}\left(\frac{m}{N}\right)\sim N\int_{0}^{y}\ln f(z)dz+\frac{\ln\Lambda(0)\mathrm{M}(0)-\ln\Lambda(y)\mathrm{M}(y)}{2},
\end{gather*}
where $y=\frac{|A|}{N}$ .

So

\begin{gather*}
P_{N}(A)\sim\frac{1}{Z_{N}\sqrt{\Lambda(y)\mathrm{M}(y)}}\exp\left\{ N\int_{0}^{y}\ln f(z)dz\right\} .\;\;\;\;(A.3)
\end{gather*}

Using Stirling's formula $Z_{N}$ we have for the partition function 

\begin{gather*}
Z_{N}\sim\int_{0}^{1}\sqrt{\frac{1}{2\pi N(1-y)y\Lambda(y)\mathrm{M}(y)}}\exp\left\{ NF(y)\right\} dy.\;\;\;\;(A.4)
\end{gather*}

For the probabilities $P_{A,N}(B)=\mathbf{\mathrm{Pr}}_{N}\{\xi_{N}(x,t)=1,\, x\in B,\,\xi_{N}(x,t)=0,\, x\in A\setminus B$\},$\,$
$B\subseteq A\subset\Omega_{N},$ we have 

\begin{gather*}
P_{A,N}(B)=\frac{1}{Z_{N}}\,\sum_{B\leq n\leq N-|A|+|B|}\,\frac{(N-|A|)!}{\left[N-|A|-n+|B|\right]!(n-|B|)!}\exp\left\{ \sum_{m=0}^{n-1}\ln f_{N}\left(\frac{m}{N}\right)\right\} .
\end{gather*}
Now 

\begin{gather*}
\frac{(N-|A|)!}{\left[N-|A|-n+|B|\right]!(n-|B|)!}\sim(1-y)^{|A|-|B|}y^{|B_{}|}\frac{N!}{(N-n)!n!},\: y=\frac{n}{N},
\end{gather*}
and again using Stirling's formula we have

\begin{gather*}
P_{A,N}(B)\sim\frac{1}{Z_{N}}\:\int_{\frac{|\mathit{B|}}{N}\leq y\leq1-\frac{|A|-|B|}{N}}\,(1-y)^{|A|-|B|}y^{|B_{}|}\frac{\exp\left\{ NF(y)\right\} }{\sqrt{2\pi\mathit{N}(1-\mathit{y})\mathit{y\mathrm{\Lambda}\mathrm{(}y\mathrm{)}\mathrm{M}\mathrm{(}y\mathrm{)}}}}dy.
\end{gather*}

By using the Laplace's method we have

\begin{gather*}
\int_{\frac{|\mathit{B|}}{N}\leq y\leq\frac{|A|-|B|}{N}}\,(1-y)^{|A|-|B|}y^{|B|}\frac{\exp\left\{ NF(y)\right\} }{\sqrt{2\pi\mathit{N}(1-\mathit{y})\mathit{y}\Lambda(y)\mathrm{M}(y)}}dy\sim\\
\sim(1-p)^{|A|-|B|}p^{|B|}\frac{\exp\left\{ NF(p)\right\} }{N\sqrt{F^{\prime\prime}(p)(1-\mathit{p})\mathit{p\mathrm{\Lambda}\mathrm{(}p\mathrm{)}\mathrm{M}\mathrm{(}p\mathrm{)}}}}
\end{gather*}
and 

\[
Z_{N}\sim\frac{\exp\left\{ NF(p)\right\} }{N\sqrt{F^{\prime\prime}(p)(1-\mathit{p})\mathit{p\mathrm{\Lambda}\mathrm{(}p\mathrm{)}\mathrm{M}\mathrm{(}p\mathrm{)}}}}.
\]
Therefore

\[
P_{A,N}(B)\sim(1-p)^{|A|-|B|}p^{|B|}
\]
proves the asymptotic independence.

From

\begin{gather*}
F^{\prime}(p)=\ln f(p)-\ln\frac{p}{1-p}=0
\end{gather*}
it follows that

\[
p=\frac{f(p)}{1+f(p)}.
\]

\medskip{}
 $Thanks$: Authors thank the FAPESP for support (grant 2009/15886-9
and 2009/15942-6),

\noindent \begin{flushleft}
Authors are very grateful to V.A.Malyshev and S.A.Pirogov for their
attention to this work.
\par\end{flushleft}

\begin{flushleft}
. 
\par\end{flushleft}
\end{document}